\def\BibTeX{{\rm B\kern-.05em{\sc i\kern-.025em b}\kern-.08em
   T\kern-.1667em\lower.7ex\hbox{E}\kern-.125emX}}
\newtheorem{assumption}{Assumption}
\newtheorem{theorem}{Theorem}
\newtheorem{definition}{Definition}
\newtheorem{remark}{Remark}
\newtheorem{lemma}{Lemma}
\newtheorem{problem}{Problem}
\tikzset{%
  every neuron/.style={
    circle,
    draw,
    minimum size=.7cm
  },
  neuron missing/.style={
    draw=none, 
    scale=3,
    text height=0.333cm,
    execute at begin node=\color{black}$\vdots$
  },
}
\definecolor{red}{RGB}{187,0,0}
\definecolor{blue}{RGB}{0, 0,180}
\definecolor{pink}{RGB}{203, 76, 178}
\algnewcommand\algorithmicinput{\textbf{Input:}}
\algnewcommand\algorithmicoutput{\textbf{Output:}}
\algnewcommand\Input{\item[\algorithmicinput]}
\algnewcommand\Output{\item[\algorithmicoutput]}
\def\BibTeX{{\rm B\kern-.05em{\sc i\kern-.025em b}\kern-.08em
    T\kern-.1667em\lower.7ex\hbox{E}\kern-.125emX}}
\begin{document}
\title{ Risk-Constrained Control of Mean-Field Linear Quadratic Systems}
\author{Masoud Roudneshin, Saba Sanami, and Amir G. Aghdam
\thanks{Masoud Roudneshin, Saba Sanami and Amir G. Aghdam are with the Department of Electrical and Computer Engineering, Concordia Univerity, Montreal, QC, Canada. Email: {\tt\small masoud.roudneshin.concordia.ca, amir.aghdam@concordia.ca}}
}
\pagestyle{empty} 

\maketitle
\thispagestyle{empty} 
\begin{abstract}
The risk-neutral LQR controller is optimal for stochastic linear dynamical systems. However, the classical optimal controller performs inefficiently in the presence of low-probability yet statistically significant (risky) events. The present research focuses on infinite-horizon risk-constrained linear quadratic regulators in a mean-field setting. We address the risk constraint by bounding the cumulative one-stage variance of the state penalty of all players. It is shown that the optimal controller is affine in the state of each player with an additive term that controls the risk constraint. In addition, we propose a solution independent of the number of players. Finally, simulations are presented to verify the theoretical findings.
\end{abstract}

\section{Introduction}

The performance evaluation of dynamical systems in the optimal control framework has long been studied in the literature. Specifically, in the linear quadratic regulator (LQR) with noisy inputs, the focus is on minimizing the expected cumulative time-average quadratic cost, also known as a risk-neutral setting \cite{opt1}. However, such a risk-neutral framework often exhibits unsatisfactory performance in real-world control systems. For instance, there exists a rich body of research to address risk in different areas, including robotics \cite{robo1, robo2}, financial systems \cite{finance1, finance2}, power grids \cite{energy1, energy2}, and multi-agent networks \cite{multi1, multi2}. Moreover, neglecting the effect of low-probability severe external events may lead to catastrophic consequences in dynamic systems, like crashing in a flock of UAVs or an autonomous vehicle hitting other vehicles and pedestrians.

There has been an increasing interest in the research community recently in the risk assessment of dynamical systems by deriving closed-form solutions for a single-agent setting \cite{tamar cdc, pappas, basar transaction}. Specifically, by solving a set of Riccati and fixed-point equations, one can obtain an affine form of the policy to meet the system’s constraints. However, in the control of a large number of agents, such a method may not provide sufficient efficacy.

This research considers the problem of exchangeable agents (players) in a mean-field setting. In such a setting, all agents have similar dynamics, and the players' states evolve as a linear function of their previous states and the overall average state. Using the results in mean-field theory, we show that the required Riccati equation (whose size increases with the number of players) can be decomposed into two Riccati equations with the same dimension as the agents' states. Furthermore, we propose a primal-dual algorithm to solve the problem iteratively.

The rest of the paper is organized as follows. In Section~II, we present some preliminaries and formulate the problem. The solution to the optimization problem is derived in Section~III, followed by simulations to validate the results in Section~IV. Finally, some concluding remarks and directions for future research are given in Section~V.

\section{Problem Formulation}
Throughout the paper, $\mathbb{R}$, $\mathbb{R}_{>0}$ and $\mathbb{N}$ represent the sets of real, positive real and natural numbers, respectively.  Given any $ n \in \mathbb{N}$, $\mathbb{N}_n$, and $\mathbf{I}_{n \times n}$ denote the finite set $\{1,\ldots,n\}$, and the $n\times n$ identity matrix, respectively.    $\| \boldsymbol \cdot \|$ is the  spectral norm of a matrix,  $\text{Tr}(\boldsymbol\cdot)$ is the trace of a matrix,  $\tau_{\text{min}}(\boldsymbol \cdot)$ is the minimum singular value of a matrix, $\rho(\boldsymbol \cdot)$ is  the spectral radius of a matrix,   and  $\text{diag}(\Lambda_1, \Lambda_2)$ is the block diagonal matrix $[\Lambda_1\quad 0;0 \quad \Lambda_2]$, and $\text{diag}(\Lambda)_{i=1}^k$ denotes a bloack-diagonal matrix with $k$ times repetition of the matrix $\Lambda$. For vectors $x,y$ and $z$, $\text{vec}(x,y,z)=[x^\intercal, y^\intercal,z^\intercal]^\intercal$ is a column vector, $x_{1:t}$ denotes the vector $(x_1,...,x_t)$ and the operator $\otimes$ denotes the Kronecker product between two matrices of appropriate size. Also, the rectified linear function is denoted by the operator $[x]_{+}=\text{max}\{0,x\}$.

\subsection{General Form of the Problem}
Given $n\in\mathbb{N}$ players, let $x^i_t \in \mathbb{R}^{d_x}$, $u^i_t \in \mathbb{R}^{d_u}$ and $w^i_t \in \mathbb{R}^{d_x}$ denote, respectively, the state, action and local noise of player $i \in \mathbb{N}_n$ at time $t \in \mathbb{N}$, where $d_x,d_u \in \mathbb{N}$. Define  the  mean-state of the players as $\bar x_t:=\frac{1}{n}\sum_{i=1}^n  x^i_t$.
 The initial states $\{x^1_0,\ldots,x^n_0 \}$ are  random  with   finite covariance matrices.   The evolution of  the state of any player $i \in \mathbb{N}_n$ at time $t \in \mathbb{N}$  is given by:
\begin{equation}\label{eq:dynamics_original}
x^i_{t+1}=Ax^i_t +Bu^i_t+ \bar A \bar x_t+ \bar B \bar u_t +w^i_t,
\end{equation}
where $\{w^i_t\}_{t=0}^\infty$ is an independent and identically distributed (i.i.d.) zero-mean  noise process with  a finite  covariance matrix.

The per-step cost of all players at time $t \in \mathbb{N}$ is given by:
\begin{equation}\label{eq:cost_original}
c_t=(\bar{x}_t)^{\intercal}\bar Q\bar{x}_t+(\bar{u}_t)^{\intercal}\bar R\bar{u}_t +\frac{1}{n} \sum_{i=1}^n(x_t^i)^{\intercal}Q x_t^i +(u_t^i)^{\intercal}R u_t^i,
\end{equation}
where $Q$, $\bar{Q}$, $R$, and $\bar{R}$ are symmetric matrices with appropriate dimensions. 
\begin{definition}
Let $h_t^i = \{x^i_{0}, u^i_{0}, ...,x^i_{t-1}, u^i_{t-1},x^i_{t}\}$ denote the history trajectory of player $i\in\mathbb{N}_{n}$. Then, the per-step risk factor for the $i$th player is defined as
\begin{equation}\label{risk player}
    d_t^i = \Big((x_t^i)^{\intercal}Q x_t^i - \mathbb{E}{\big[ (\mathbf x_t^i)^{\intercal}Q \mathbf x_t^i|h_t^i\big]} \Big)^2. 
\end{equation}
\end{definition}

\begin{assumption}\label{ass: existence, my pr}
It is assumed hereafter that the pair $(A, B)$ is stabilizable, the pair $(A,Q^{\frac{1}{2}})$ is detectable, and matrices $Q$ and $R$ are positive semi-definite and positive definite, respectively.
\end{assumption}
\begin{assumption}\label{same noise stat}
The local noises $w_t^1,...,w_t^n$ have the same distribution. 
\end{assumption}
\begin{assumption}
The noise $w_t^i$ for every player $i\in\mathbb{N}_n$ has a finite fourth-order moment, i.e., $\mathbb{E}\lVert w_t^i\rVert^4<\infty.$
\end{assumption}

In this paper, we consider the infinite-horizon risk-constrained LQR for a team of cooperative players to minimize a common cost. Also, it is desired to constrain the cumulative per-step risk of all players. This leads to the following constrained optimization problem
\begin{subequations}\label{cooperative}
\begin{align}
\label{optfunc}\mathrm{minimize}& \hspace{0.5cm}J = \limsup_{T\rightarrow \infty }\frac{1}{T}\mathbb{E}{\bigg[\sum_{t = 0}^{T} c_t \bigg]}\\\label{dynconst}
\mathrm{s.t.}&\hspace{0.5cm}\eqref{eq:dynamics_original} \hspace{0.25cm}\text{and } \forall i\in \mathbb{N}_n, \\\label{optconst}
 &\hspace{0.5cm} J_c= \frac{1}{n}\sum_{i = 1}^{n}\limsup_{T\rightarrow \infty }\frac{1}{T}\mathbb{E}{\bigg[\sum_{t = 0}^{T} d_t^i \bigg]}\leq \Gamma, 
\end{align}
\end{subequations}
where $\Gamma>0$ is a predefined risk tolerance of the user. 
\begin{remark}
From~\cite{masoud cdc, Jalal1}, when player $i \in \mathbb{N}_n$ at any time $t \in \mathbb{N}$ observes its local state $x^i_t$ and  the mean state~$\bar x_t$, i.e. $\{x^i_{1:t},\bar x_{1:t}\}$, an information structure called \textit{deep state sharing} (DSS) is considered. 
\end{remark}
\begin{definition}
Let the control input of player $i \in \mathbb{N}_n$ at time $t$ be denoted by $u^i_t=\phi^i_t(x^i_{1:t},\bar x_{1:t})$. Define  ${\Phi}^i:=\{\phi^i_t\}_{t=1}^\infty$ and $\boldsymbol{\Phi}_n:=\{ \boldsymbol {\Phi}^1,\ldots,\boldsymbol{\Phi}^n\}$ as the control strategy of player $i$ and that of all players, respectively.
\end{definition}
We now present the main problem of this article.

\begin{problem}\label{problem 1}
Consider the risk-constrained mean-field LQR problem in \eqref{cooperative}. Given the system dynamics \eqref{eq:dynamics_original}, find an optimal control strategy $\boldsymbol{\Phi}^*$ such that for any arbitrary control law $\boldsymbol{\Phi}$, the cost function \eqref{optfunc} under the constraints \eqref{dynconst} and \eqref{optconst} satisfies the following inequality
\begin{equation*}
    J(\boldsymbol{\Phi}^*) \leq J(\boldsymbol{\Phi}).
\end{equation*}
\end{problem}

\section{Main Results}
In this section, we propose a step by step solution to the optimization problem~\eqref{cooperative}.  
\subsection{Problem Reformulation}

Define a new transformed state $\Tilde{x}_t^i = x_t^i-\bar{x}_t$ for player $i \in \mathbb{N}_n$. Define also the mean control input of all players as $\bar u_t:=\frac{1}{n}\sum_{i=1}^n u^i_t$, and the transformed control input of player $i \in \mathbb{N}_n$ as $\Tilde{u}_t^i = u_t^i-\bar{u}_t$. It follows from \cite{Jalal1} that
\begin{equation}\label{dynammics transformed}
\begin{split}
  \Tilde{x}_{t+1}^i &= A \Tilde{x}_t^i  + B \Tilde{u}_t^i + \Tilde{w}_t^i\\
   \bar{x}_{t+1} &= \mathcal{A} \bar{x}_t  + \mathcal{B}  \bar{u}_t + \bar{w}_t,
\end{split}
\end{equation}
where $\mathcal{A} = A + \bar{A}$, $\mathcal{B} = B + \bar{B}$, $\bar w_t:=\frac{1}{n}\sum_{i=1}^n w^i_t$ and $\Tilde{w}_t^i = w_t^i-\bar{w}_t$.

Next, define the first and second-order moments (mean and covariance) of each player's local noise as $m_1=\mathbb{E}[{w}_t^i]$ and $M_2 = \mathbb{E}[({w}_t^i - m_1^i)({w}_t^i - m_1^i)^\intercal]$, respectively. Furthermore, let the next two higher order moments of the local noise be defined as
\begin{equation}
    \begin{split}
    {M}_3&=\mathbb{E}[({w}_t^i - m_1^i)({w}_t^i -  m_1^i)^\intercal{Q}({w}_t^i - m_1^i)],\\
    {M}_4&= \mathbb{E}[({w}_t^i - m_1^i)^\intercal{Q}({w}_t^i - m_1^i) - \text{Tr} (M_2{Q})]^2.\\
    \end{split}
\end{equation}
Also, for future reference, define $\mathbb{m}_1 = \mathbb{E}[\Tilde{w}_t^i]$ and  $\mathbb{M}_1 = \mathbb{E}[(\Tilde{w}_t^i - \mathbb{M}_1)(\Tilde{w}_t^i - \mathbb{m}_1)^\intercal]$.
\begin{lemma}\label{risk decomposed}
The risk-constrained optimization problem in \eqref{cooperative} can be reformulated as
\begin{equation}\label{cooperative-tractable}
\begin{split}
     &{\mathrm{minimize}} \hspace{1cm}J = \limsup_{T\rightarrow \infty }\frac{1}{T}\mathbb{E}{\bigg[\sum_{t = 0}^{T} c_t^i \bigg]}\\
     &\mathrm{s.t.}\hspace{2cm}\eqref{dynammics transformed} \hspace{0.25cm}\mathrm{and},\\
     &\hspace{2.3cm}\Tilde{J}_c= J_{\bar c}+\sum_{i=1}^n \Tilde{J}_c^i \leq \Lambda\hspace{0.25cm} \forall i\in \mathbb{N}_n, \\
\end{split}
\end{equation}
where 
\begin{equation*}
    \begin{split}
         &{J}_{\Tilde{c}}^i= \lim_{T\rightarrow \infty }\frac{1}{T}\mathbb{E}{\bigg[\sum_{t = 0}^{T} \frac{4}{n}  (\Tilde{x}_t^i)^\intercal  Q  M_2  Q \Tilde{x}_t^i   \bigg]},\\
     &J_{\bar c}= \limsup_{T\rightarrow \infty }\frac{1}{T}\mathbb{E}{\bigg[\sum_{t = 0}^{T} 4  ( \bar{x}_t)^\intercal  Q {{M}_2} Q \bar{x}_t + 4  ( \bar{x}_t)^\intercal Q {M}_3 \bigg]},
    \end{split}
\end{equation*}
and $\Lambda = \Gamma - {m}_4 + \text{Tr} (M_2{Q})^2$. 
\end{lemma}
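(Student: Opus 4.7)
The plan is to compute $\mathbb{E}[d_t^i]$ explicitly from the noise model, decompose the resulting quadratic forms via $x_t^i = \bar x_t + \tilde x_t^i$, and show that the residual policy-independent constants can be absorbed into the tolerance $\Lambda$.

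First, I would compute the conditional expectation of $d_t^i$. Introducing the one-step predictor $\mu_t^i := A x_{t-1}^i + B u_{t-1}^i + \bar A \bar x_{t-1} + \bar B \bar u_{t-1}$, so that $x_t^i = \mu_t^i + w_{t-1}^i$ with $w_{t-1}^i$ zero-mean and independent of $\mu_t^i$, a direct expansion of $(x_t^i)^\intercal Q x_t^i$ around $\mu_t^i$ gives
\begin{equation*}
(x_t^i)^\intercal Q x_t^i - \mathbb{E}\bigl[(x_t^i)^\intercal Q x_t^i \,\big|\, \mu_t^i\bigr] = 2(\mu_t^i)^\intercal Q w_{t-1}^i + (w_{t-1}^i)^\intercal Q w_{t-1}^i - \text{Tr}(M_2 Q).
\end{equation*}
Squaring this identity and using the definitions of $M_2$, $M_3$, $M_4$ together with $m_1 = 0$, the cross products collapse to
\begin{equation*}
\mathbb{E}\bigl[d_t^i \,\big|\, \mu_t^i\bigr] = 4(\mu_t^i)^\intercal Q M_2 Q \mu_t^i + 4(\mu_t^i)^\intercal Q M_3 + M_4.
\end{equation*}

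Next, I would pass from the predictor $\mu_t^i$ back to the state $x_t^i$ under unconditional expectation. Independence of $w_{t-1}^i$ from $\mu_t^i$ together with $m_1 = 0$ yields $\mathbb{E}[(x_t^i)^\intercal Q M_2 Q x_t^i] = \mathbb{E}[(\mu_t^i)^\intercal Q M_2 Q \mu_t^i] + \text{Tr}((Q M_2)^2)$ and $\mathbb{E}[(x_t^i)^\intercal Q M_3] = \mathbb{E}[(\mu_t^i)^\intercal Q M_3]$, and the extra trace offset is both time-constant and policy-independent, so it will ultimately be collected into $\Lambda$ together with $M_4$.

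Then I would apply the mean-field decomposition $x_t^i = \bar x_t + \tilde x_t^i$. Since $\sum_{i=1}^n \tilde x_t^i = 0$, averaging over agents kills the $\bar x_t$--$\tilde x_t^i$ cross terms, giving
\begin{equation*}
\frac{1}{n}\sum_{i=1}^n (x_t^i)^\intercal Q M_2 Q x_t^i = \bar x_t^\intercal Q M_2 Q \bar x_t + \frac{1}{n}\sum_{i=1}^n (\tilde x_t^i)^\intercal Q M_2 Q \tilde x_t^i,
\end{equation*}
and $\tfrac{1}{n}\sum_i (x_t^i)^\intercal Q M_3 = \bar x_t^\intercal Q M_3$. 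Substituting back, the $\bar x_t$-quadratic and $\bar x_t$-linear parts assemble into $J_{\bar c}$, the deviation quadratic forms assemble into $\sum_{i=1}^n \tilde J_c^i$, and the leftover policy-independent constant shifts $\Gamma$ into $\Lambda$. Interchanging $\limsup\tfrac{1}{T}$ with the finite sum over $i$ is justified by nonnegativity of each summand and by the fact that any admissible stabilizing strategy keeps $\mathbb{E}\|\tilde x_t^i\|^2$ and $\mathbb{E}\|\bar x_t\|^2$ uniformly bounded in $t$.

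The main obstacle I anticipate is the bookkeeping of constants: the fourth-moment contribution from squaring and the $\text{Tr}((Q M_2)^2)$ offset produced by the $\mu_t^i \leftrightarrow x_t^i$ substitution must be aggregated cleanly into the single constant appearing in $\Lambda$, with careful tracking of any notational convention (e.g.\ between $\text{Tr}((M_2 Q)^2)$ and $(\text{Tr}(M_2 Q))^2$, or between $M_4$ and the displayed $m_4$). A secondary technical step is the interchange of $\limsup \tfrac{1}{T}$ with the averages over agents, but this is routine once the closed-loop deviation dynamics in \eqref{dynammics transformed} are stabilized.
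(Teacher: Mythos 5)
Your proposal is correct and follows essentially the same route as the paper: the paper's proof simply cites Tsiamis et al.\ for the reformulation of the one-step conditional variance into the $4x^\intercal QM_2Qx + 4x^\intercal QM_3$ form and then applies the substitution $x_t^i=\tilde x_t^i+\bar x_t$ with $\sum_{i=1}^n\tilde x_t^i=0$, exactly as you do. The only difference is that you derive the cited single-agent step from first principles (via the one-step predictor $\mu_t^i$), and your careful flagging of the constant bookkeeping is warranted --- the offset absorbed into $\Lambda$ should carry the factor $4$ from $4\,\mathrm{Tr}\bigl((M_2Q)^2\bigr)$, which the paper's stated $\Lambda$ appears to drop.
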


\begin{proof}
Using the results in \cite{pappas}, the constraint in \eqref{optconst} can be reformulated as
\begin{equation*}
    J_c = \frac{1}{n}\sum_{i = 1}^{n}\limsup_{T\rightarrow \infty }\frac{1}{T}\mathbb{E}{\sum_{t = 0}^{T}} 4( {x}_t^i)^\intercal  Q {{M}_2} Q {x}_t^i + 4  ({x}_t^i)^\intercal Q {M}_3.
\end{equation*}
The proof follows immediately by rewriting the above equation as ${x}_t^i = \Tilde{x}_t^i + \bar{x}_t$, and on noting that $\sum_{i=1}^n \Tilde{x}_t^i = 0$. 
\end{proof}
\subsection{Primal-Dual Approach}
To solve the constrained optimization problem \eqref{cooperative-tractable}, we use $\lambda\geq 0$ as the Lagrange multiplier. The Lagrangian can then be expressed as
\begin{equation}\label{Lagrangian}
\mathcal{L}(\boldsymbol{\Phi},\lambda) =J + {\lambda}({\Tilde{J}_{c}}- {\Lambda}).  
\end{equation}
\begin{definition}
Define the  matrices $Q_{c} = \frac{4}{n} Q M_2 Q$, $Q_{\bar c} = 4 Q M_2 Q$, $ Q_{\lambda} = \frac{1}{n}Q + \lambda Q_{c}$, and ${Q}_{ \bar \lambda} = Q + \bar Q + \lambda Q_{\bar c}$.
\end{definition}

\begin{lemma}\label{lagrange decompose}
The Lagrangian in \eqref{Lagrangian} can be reformulated as
\begin{equation}\label{lagrangian new}
  \mathcal{L}(\boldsymbol{\Phi},\lambda) =  \bar {\mathcal{L}} + \sum_{i=1}^n\mathcal{L}^i, 
\end{equation}
where
\begin{equation*}
    \begin{split}
        \mathcal{L}^i &= \limsup_{T\rightarrow \infty }\frac{1}{T}\mathbb{E}\bigg[(\Tilde{x}_t^i)^\intercal(Q_{\lambda})\Tilde{x}_t^i + (\Tilde{u}_t^i)^\intercal\frac{1}{n}R\Tilde{u}_t^i \bigg],\\
        \bar{\mathcal{L}} &= \limsup_{T\rightarrow \infty }\frac{1}{T}\mathbb{E}\bigg[ \bar{x}_t^\intercal({Q}_{\bar{\lambda}})\bar{x}_t + S_{\lambda}\bar{x}_t + \bar{u}_t^\intercal(R+\bar{R})\bar{u}_t\bigg].
    \end{split}
\end{equation*}
\end{lemma}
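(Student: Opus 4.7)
The plan is to start from the definition of the Lagrangian in \eqref{Lagrangian} and separately rewrite the cost $J$ and the constraint surrogate $\tilde{J}_c$ in terms of the transformed variables $\tilde{x}_t^i$ and $\bar{x}_t$, then regroup everything into the announced mean part $\bar{\mathcal{L}}$ plus $n$ per-agent parts $\mathcal{L}^i$. The two identities I will use throughout are $\sum_{i=1}^n \tilde{x}_t^i = 0$ and $\sum_{i=1}^n \tilde{u}_t^i = 0$, which follow immediately from the definition of $\tilde{x}_t^i = x_t^i - \bar{x}_t$ and $\tilde{u}_t^i = u_t^i - \bar{u}_t$ and which will make all cross terms vanish.

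First I would expand the per-step cost $c_t$ in \eqref{eq:cost_original} using $x_t^i = \tilde{x}_t^i + \bar{x}_t$ and $u_t^i = \tilde{u}_t^i + \bar{u}_t$. Expanding the quadratic forms produces three types of terms: a pure mean contribution $\bar{x}_t^\intercal Q \bar{x}_t + \bar{u}_t^\intercal R \bar{u}_t$, a pure deviation contribution $\tfrac{1}{n}\sum_i [(\tilde{x}_t^i)^\intercal Q \tilde{x}_t^i + (\tilde{u}_t^i)^\intercal R \tilde{u}_t^i]$, and cross terms of the form $\tfrac{2}{n}\sum_i (\tilde{x}_t^i)^\intercal Q \bar{x}_t$, which vanish by $\sum_i \tilde{x}_t^i = 0$ (and similarly for $\tilde{u}_t^i$). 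Adding back the already-decoupled mean contributions $\bar{x}_t^\intercal \bar{Q} \bar{x}_t + \bar{u}_t^\intercal \bar{R} \bar{u}_t$ then yields a clean split $c_t = \big[\bar{x}_t^\intercal(Q+\bar{Q})\bar{x}_t + \bar{u}_t^\intercal(R+\bar{R})\bar{u}_t\big] + \tfrac{1}{n}\sum_i\big[(\tilde{x}_t^i)^\intercal Q \tilde{x}_t^i + (\tilde{u}_t^i)^\intercal R \tilde{u}_t^i\big]$.

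Next I would apply exactly the same substitution to $\tilde{J}_c$ as given by Lemma~\ref{risk decomposed}. The mean piece $J_{\bar c}$ is already a function of $\bar{x}_t$ alone, and $\sum_{i=1}^n \tilde{J}_c^i$ is a sum of quadratic forms in $\tilde{x}_t^i$ only, so no cross terms arise. Multiplying by $\lambda$, the risk contribution contributes $\bar{x}_t^\intercal(\lambda Q_{\bar c})\bar{x}_t$ and a linear term (the $4\lambda (\bar{x}_t)^\intercal Q M_3$ piece, which I will fold into $S_\lambda \bar{x}_t$) on the mean side, and $(\tilde{x}_t^i)^\intercal(\lambda Q_c)\tilde{x}_t^i$ on the $i$th agent side. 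Combining the cost pieces with the risk pieces and invoking the definitions $Q_\lambda = \tfrac{1}{n}Q + \lambda Q_c$ and $Q_{\bar\lambda} = Q + \bar{Q} + \lambda Q_{\bar c}$ identifies precisely the integrands inside $\bar{\mathcal{L}}$ and $\mathcal{L}^i$. The residual constant $-\lambda \Lambda$ is independent of the strategy and of $t$, so under $\limsup_T \tfrac{1}{T}\sum_{t=0}^T$ it either is absorbed as an additive constant or dropped as inconsequential to the minimization, which is the convention used in the statement.

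The proof is essentially algebraic, and the only step that requires real care is bookkeeping: keeping track of which cross terms truly vanish by $\sum_i \tilde{x}_t^i = 0$ versus which ones survive because the weights $\tfrac{1}{n}$ and the coefficients of $Q, \bar{Q}, R, \bar{R}, Q_c, Q_{\bar c}$ combine asymmetrically. The likely pitfall is the treatment of the affine term $4\bar{x}_t^\intercal Q M_3$ in $J_{\bar c}$; I would make sure it is placed inside $\bar{\mathcal{L}}$ as $S_\lambda \bar{x}_t$ with $S_\lambda := 4\lambda M_3^\intercal Q$, since there is no corresponding term on the deviation side (the deviations have zero mean in the sense that $\sum_i \tilde{x}_t^i = 0$, so no linear-in-$\tilde{x}_t^i$ term can appear after the regrouping).
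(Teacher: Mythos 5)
Your proposal is correct and follows essentially the same route as the paper's own (very terse) proof: substitute $x_t^i=\tilde{x}_t^i+\bar{x}_t$ and $u_t^i=\tilde{u}_t^i+\bar{u}_t$ into the per-step cost and into the risk surrogate from Lemma~\ref{risk decomposed}, kill the cross terms via $\sum_{i=1}^n\tilde{x}_t^i=0$ and $\sum_{i=1}^n\tilde{u}_t^i=0$, and regroup using the definitions of $Q_\lambda$ and $Q_{\bar\lambda}$. Your explicit identification $S_\lambda=4\lambda M_3^\intercal Q$ and your remark on absorbing the constant $-\lambda\Lambda$ fill in details the paper leaves implicit, but they do not change the argument.
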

\begin{proof}
The result follows directly from Lemma~\ref{risk decomposed}, the definition of the per-step cost in \eqref{eq:cost_original}, and on noting that $\sum_{i=1}^n \Tilde{x}_t^i = 0$ and $\sum_{i=1}^n \Tilde{u}_t^i = 0$.
\end{proof}
To solve for the optimal value of the Lagrangian $\mathcal L^*$ in \eqref{Lagrangian}, we find the general form of the policies for a constant multiplier.  
\begin{theorem}\label{primal optimal policy}
 For a fixed multiplier $\lambda$, the optimal policy for each player is affine, such that
 \begin{equation}\label{Main policy}
     {u}_t^i = - \theta(\lambda){x}_t^i - (\bar \theta(\lambda) - \theta(\lambda))\bar{x}_t+ \tau (\lambda) + \bar\tau (\lambda) ,
 \end{equation}
in which 
\begin{equation}\label{optimal gain}
\begin{split}
\theta &= -({R} + {B}^\intercal{P}{B})^{-1}{B}^\intercal{P}{A},\\
\bar \theta &= -(\mathcal{R} + \mathcal{B}^\intercal\mathcal{P}\mathcal{B})^{-1}\mathcal{B}^\intercal\mathcal{P}\mathcal{A},  
\end{split}
\end{equation}
and 
\begin{equation}\label{optimal gain affine}
\begin{split}
\tau &= -\frac{1}{2}({R} + {B}^\intercal{P}{B})^{-1}{B}^\intercal(2{P}\mathbb{m}_1+{g}),\\
\bar\tau &= -\frac{1}{2}(\mathcal{R} + \mathcal{B}^\intercal\mathcal{P}\mathcal{B})^{-1}\mathcal{B}^\intercal(2\mathcal{P}{m}_1+ \mathbf{g}),
\end{split}
\end{equation}
where $P$, $\mathcal{P}$, $g$ and $\mathbf{g}$ are obtained by solving the following recursive equations
\begin{equation}\label{Riccati}
\begin{split}
 {P} &= Q_{\lambda}{A}^\intercal{P}{A} -{A}^\intercal{P}{B}({R} + {B}^\intercal{P}{B})^{-1}{B}^\intercal{P}{A},\\
 \mathcal{P} &= Q_{\bar \lambda} \mathcal{A}^\intercal \mathcal{P} \mathcal{A} - \mathcal{A}^\intercal \mathcal{P} \mathcal{B}(\mathcal{R} +  \mathcal{B}^\intercal \mathcal{P} \mathcal{B})^{-1} \mathcal{B}^\intercal \mathcal{P} \mathcal{A},\\
 {g}^\intercal &=(2\mathbb{m}_1^\intercal {P}+ {g}^\intercal)({A} -{B}\theta),\\
  \mathbf{g}^\intercal &= (2{m}_1^\intercal \mathcal{P}+ \mathbf{g}^\intercal)(\mathcal{A} - \mathcal{B}\bar \theta)+ 4\lambda (QM_3)^\intercal.  
\end{split}
\end{equation}
\end{theorem}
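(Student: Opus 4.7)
The plan is to leverage the decomposition in Lemma~\ref{lagrange decompose}: since the Lagrangian splits as $\mathcal{L}=\bar{\mathcal{L}}+\sum_{i=1}^n \mathcal{L}^i$ and the dynamics in \eqref{dynammics transformed} decouple into an independent mean process $\bar{x}_t$ and $n$ independent deviation processes $\tilde{x}_t^i$, the minimization reduces to two independent infinite-horizon average-cost LQR subproblems. A point of care is the identity $\sum_{i=1}^n \tilde{u}_t^i=0$, but if every player applies the same linear-plus-offset feedback $\tilde{u}_t^i=-\theta\tilde{x}_t^i+\tau$ then, using $\sum_i \tilde{x}_t^i\equiv 0$ and Assumption~\ref{same noise stat}, this identity is preserved automatically, so one may freely optimize each deviation problem on its own.

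For each subproblem I would use the standard relative-value-function ansatz for average-cost LQR with affine noise drift and a linear cost term, namely $V(x)=x^\intercal P x + g^\intercal x + \mathrm{const}$. Plugging this into the Bellman identity
\begin{equation*}
V(x)+J^{*}=\min_{u}\big\{\ell(x,u)+\mathbb{E}[V(Ax+Bu+w)]\big\}
\end{equation*}
and taking the first-order condition in $u$ yields the affine feedback laws $\tilde{u}_t^i=-\theta\tilde{x}_t^i+\tau$ and $\bar{u}_t=-\bar\theta\bar{x}_t+\bar\tau$ with coefficients exactly \eqref{optimal gain} and \eqref{optimal gain affine}. Matching the quadratic coefficients on both sides of the Bellman identity produces the two Riccati equations in \eqref{Riccati}, while matching the linear coefficients---keeping track of the extra linear term $S_\lambda\bar{x}$ inside $\bar{\mathcal{L}}$ and the contribution $4\lambda (QM_3)^\intercal$ arising from $J_{\bar c}$ in Lemma~\ref{risk decomposed}---produces the recursions for $g$ and $\mathbf g$. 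Reassembling $u_t^i=\tilde{u}_t^i+\bar{u}_t$ and using $\tilde{x}_t^i=x_t^i-\bar{x}_t$ then recovers exactly \eqref{Main policy}.

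The main obstacle is well-posedness: the two algebraic Riccati equations must admit unique stabilizing positive-semidefinite solutions $P$ and $\mathcal{P}$ so that the closed-loop matrices $A-B\theta$ and $\mathcal{A}-\mathcal{B}\bar\theta$ are Schur. This Schur property is precisely what guarantees (i) that the infinite-horizon time averages defining $J$, $\bar{\mathcal{L}}$, and $\mathcal{L}^i$ are well defined, and (ii) that the two linear recursions for $g$ and $\mathbf g$ have unique bounded fixed points (obtained by a Neumann series on $I-(A-B\theta)^\intercal$ and $I-(\mathcal{A}-\mathcal{B}\bar\theta)^\intercal$). The deviation subproblem inherits the needed stabilizability and detectability directly from Assumption~\ref{ass: existence, my pr} because $Q_\lambda\succeq\tfrac{1}{n}Q$ for every $\lambda\ge 0$; the mean subproblem requires the analogous conditions on $(\mathcal{A},\mathcal{B})$ and $(\mathcal{A},Q_{\bar\lambda}^{1/2})$, and once those are in force the theorem follows by the classical algebraic-Riccati argument outlined above.
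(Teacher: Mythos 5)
Your proposal follows essentially the same route as the paper: both use the decomposition of Lemma~\ref{lagrange decompose} to split the problem into one mean-state LQR and $n$ identical deviation LQRs, and both then apply dynamic programming with a quadratic-plus-linear value-function ansatz (the paper via finite-horizon backward recursion, you via the average-cost Bellman identity) to obtain the gains, the offsets, the two Riccati equations, and the recursions for $g$ and $\mathbf{g}$. The only substantive additions on your side are the explicit check that $\sum_{i}\tilde{u}_t^i=0$ is preserved under identical feedback and the well-posedness discussion, where you correctly observe that the mean subproblem requires stabilizability of $(\mathcal{A},\mathcal{B})$ and detectability of $(\mathcal{A},Q_{\bar\lambda}^{1/2})$ --- conditions that the paper's Assumption~\ref{ass: existence, my pr} does not actually supply and its proof silently assumes.
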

\begin{proof}
Define the generalized state and action of all agents in an augmented form as $\mathbf{x}_t = [\text{vec}(\Tilde{x}_t^i)_{i=1}^n,\bar{x}_t]$ and $\mathbf{u}_t = [\text{vec}(\Tilde{u}_t^i)_{i=1}^n,\bar{u}_t]$, respectively. Then, it follows that
\begin{equation*}
  \mathbf{x}_{t+1} = \mathbf{A}  \mathbf{x}_{t} +  \mathbf{B}  \mathbf{u}_{t},
\end{equation*}
where
\begin{equation*}
  \mathbf{A}= \text{diag}(\text{diag}(A)_{i=1}^n,\bar{A}),\quad \mathbf{B}= \text{diag}(\text{diag}(B)_{i=1}^n,\bar{B}).  
\end{equation*}
Define the finite-horizon Lagrangian as the value function $V_T$ and note that the results in Theorem~2 of \cite{pappas} imply that the Lagrangian has a quadratic form as 
\begin{equation*}
 V_T = \mathbf{x}_t^\intercal\mathbf{P}\mathbf{x}_t + \mathbf{g}\mathbf{x}_t + \mathbf{z}_t.
\end{equation*}
Instead of solving for the optimal policy in the larger state-space of $\mathbf{x}_t$, from Lemma~\ref{lagrange decompose}, the value function can also be decomposed into a set of smaller value functions such that
\begin{equation*}
    V_T = \bar{V}_T + \sum_{i=1}^n \Tilde{V}_T^i.
\end{equation*}
Since the Lagrangians $\bar{\mathcal{L}}$ and $\Tilde{\mathcal{L}}^i$ have complete square forms, the minimization can be carried out over the smaller state space of $\Tilde{x}_t^i$ and $\bar{x}_t$. Therefore, by employing dynamic programming, we have the following two recursive optimality equations
\begin{equation*}
\begin{split}
 \Tilde{V}_T^i &= \underset{ \Tilde{u}_t}{\mathrm{min}}\big((\Tilde{x}_t^i)^\intercal Q_{\lambda}\Tilde{x}_t^i + \frac{1}{n} (\Tilde{u}_t^i)^\intercal R\Tilde{u}_t^i +\bar{V}_{T+1}^i\big),\\
    \bar{V}_T &= \underset{ \bar{u}_t}{\mathrm{min}}(\bar{x}_t^\intercal Q_{\bar\lambda}\bar{x}_t + \bar{u}_t^\intercal({R+\bar R})\bar{u}_t +\bar{V}_{T+1}).    
\end{split}
\end{equation*}
The proof follows by taking the derivative with respect to $\Tilde{u}_t^i$ and $\bar{u}_t$ and using backward dynamic programming.
\end{proof}
\begin{remark}[Strong Duality]
Using the results established in Theorem~2 of \cite{tamar cdc} and \cite{tamar journal}, there exists an optimal multiplier $\lambda^*$ such that the policy 
 \begin{equation}
     {u}_t^i = - \theta(\lambda^*){x}_t^i - (\bar \theta(\lambda^*) - \theta(\lambda^*))\bar{x}_t+ \tau (\lambda^*) + \bar\tau (\lambda^*)
 \end{equation}
 is the optimal solution to \eqref{cooperative-tractable}.
\end{remark}

\subsection{Solution of the Dual Problem with Subgradients }
Since there is no optimality gap in the optimization problem \eqref{cooperative}, we can alternatively solve the following dual problem
\begin{equation}
   \underset{ \lambda\geq 0}{\mathrm{max}}\hspace{0.25cm} D(\lambda) =\underset{ \lambda\geq 0}{\mathrm{max}}\hspace{0.1cm} \underset{\mathbf{u}}{\mathrm{min}}\hspace{0.25cm}\mathcal{L}(\mathbf{u},\lambda)
\end{equation}
which is also concave in $\lambda$. Let $d$ denote the subgradient. Then, from the results in \cite{dual1,dual2}, the subgradient of $ D(\lambda)$ can be expressed as
\begin{equation}\label{subg vec}
 d = \Tilde{J}_c(\theta, \bar \theta,\lambda) - \Lambda.
\end{equation}
The following theorem provides the explicit form of the constraints for deriving the subgradient vector. 
\begin{theorem}\label{subg comp}
Consider the stabilizing control input given by \eqref{Main policy}. Then,
\begin{equation*}
\begin{split}
 {J}_{\Tilde{c}}^i &= \text{Tr}\bigg[P_{\tilde{c}}\big(\mathbb{M}_2+(B\tau + \mathbb{m}_1)(B\tau + \mathbb{m}_1)^\intercal\big) \bigg]\\
 &+ g_{\tilde{c}}^\intercal (B\tau + \mathbb{m}_1),\\
   {J}_{\bar{c}} &= \text{Tr}\bigg[P_{\bar{c}}(\mathcal{B}\bar\tau + {m}_1)(\mathcal{B}\bar\tau + {m}_1)^\intercal \bigg] + g_{\bar{c}}^\intercal (\mathcal{B}\bar\tau + {m}_1),
\end{split}
\end{equation*}
where $P_{\bar c}$ and $P_c$ are the positive definite solutions of the following Lyapunov equations
\begin{equation}\label{Lyp const}
\begin{split}
{P}_{\tilde{c}} &= \frac{4}{n} Q{M}_2 Q+ (A - B\theta)^\intercal {P}_{\tilde{c}}(A - B\theta),\\
P_{\bar c} &= 4Q {{M}_2} Q+(\mathcal{A}-\mathcal{B}\bar{\theta})^\intercal P_{\bar c}(\mathcal{A}-\mathcal{B}\bar{\theta}), \\
\end{split}
\end{equation}
where
\begin{equation*}
\begin{split}
 g_{\tilde{c}}^\intercal &= 2 \big\{({B}{\tau} +\mathbb{m}_1)^\intercal P_{\tilde{c}}({A}-{B}{\theta})\big\} \big(I- {A}+{B}{\theta}\big)^{-1},\\
g_{\bar c}^\intercal &= 2 \big\{(\mathcal{B}\bar{\tau} +{m}_1)^\intercal P_{\bar c}(\mathcal{A}-\mathcal{B}\bar{\theta})+2M_3^\intercal Q\big\} \big(I- \mathcal{A}+\mathcal{B}\bar{\theta}\big)^{-1}.
\end{split}
\end{equation*}
\end{theorem}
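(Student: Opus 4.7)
The plan is to evaluate each infinite-horizon average by substituting the stabilizing affine controllers of Theorem~\ref{primal optimal policy} into the decoupled dynamics~\eqref{dynammics transformed} and then reducing the time-averaged lim-sup to a steady-state expectation of a quadratic functional. Under Assumption~\ref{ass: existence, my pr}, classical LQR theory ensures that the closed-loop matrices $A-B\theta$ and $\mathcal{A}-\mathcal{B}\bar\theta$ are Schur-stable, so the closed-loop recursions $\tilde x_{t+1}^i=(A-B\theta)\tilde x_t^i+(B\tau+\mathbb{m}_1)+(\tilde w_t^i-\mathbb{m}_1)$ and $\bar x_{t+1}=(\mathcal{A}-\mathcal{B}\bar\theta)\bar x_t+(\mathcal{B}\bar\tau+m_1)+(\bar w_t-m_1)$ each admit a unique invariant mean $\mu^i=(I-A+B\theta)^{-1}(B\tau+\mathbb{m}_1)$ and $\bar\mu=(I-\mathcal{A}+\mathcal{B}\bar\theta)^{-1}(\mathcal{B}\bar\tau+m_1)$, together with a unique stationary state covariance $\Sigma^i$ (respectively $\bar\Sigma$) solving the forward Lyapunov equation driven by $\mathbb{M}_2$ (respectively by the covariance of $\bar w_t-m_1$).

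Splitting each stationary state into its mean and a zero-mean fluctuation, I would write $J_{\tilde c}^i=(\mu^i)^\intercal Q_{\tilde c}\mu^i+\text{Tr}(Q_{\tilde c}\Sigma^i)$ and similarly for the quadratic part of $J_{\bar c}$. The fluctuation contribution is then converted into the desired form via the dual-Lyapunov identity $\text{Tr}(Q_{\tilde c}\Sigma^i)=\text{Tr}(P_{\tilde c}\mathbb{M}_2)$, which follows by comparing the Neumann-type series expansions of $\Sigma^i$ and $P_{\tilde c}$ in powers of $A-B\theta$. This supplies directly the $\text{Tr}(P_{\tilde c}\mathbb{M}_2)$ summand in the claim.

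Next I would handle the deterministic mean contribution. Setting $b:=B\tau+\mathbb{m}_1$ and using the Lyapunov identity $Q_{\tilde c}=P_{\tilde c}-(A-B\theta)^\intercal P_{\tilde c}(A-B\theta)$ together with the fixed-point relation $\mu^i=b+(A-B\theta)\mu^i$, a short expansion gives $(\mu^i)^\intercal Q_{\tilde c}\mu^i=b^\intercal P_{\tilde c}b+2b^\intercal P_{\tilde c}(A-B\theta)\mu^i$; the cross term is precisely $g_{\tilde c}^\intercal b$ with the $g_{\tilde c}$ written in the statement. Regrouping $b^\intercal P_{\tilde c}b=\text{Tr}(P_{\tilde c}bb^\intercal)$ with $\text{Tr}(P_{\tilde c}\mathbb{M}_2)$ inside a single trace then yields the claimed closed form for $J_{\tilde c}^i$.

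The derivation of $J_{\bar c}$ follows the same template applied to $\bar x_t$ for the quadratic piece $4\bar x^\intercal QM_2Q\bar x$. The one new ingredient is the linear cost $4\bar x_t^\intercal QM_3$ inside $J_{\bar c}$: at stationarity it contributes $4M_3^\intercal Q\bar\mu=4M_3^\intercal Q(I-\mathcal{A}+\mathcal{B}\bar\theta)^{-1}(\mathcal{B}\bar\tau+m_1)$, and propagating this through the same telescoping is what produces the extra $2M_3^\intercal Q$ summand inside the braces defining $g_{\bar c}$. The main obstacle I anticipate is precisely this final algebraic reshuffling — matching the telescoped representation of the mean contribution to the exact closed forms of $g_{\tilde c}$ and $g_{\bar c}$, and tracking the $M_3$-linear term as it is transported through the resolvent $(I-\mathcal{A}+\mathcal{B}\bar\theta)^{-1}$ — since the probabilistic content (ergodicity plus Lyapunov duality) is standard once closed-loop stability is in hand.
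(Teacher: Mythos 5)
Your argument is correct, and it reaches the stated formulas by a genuinely different route from the paper. The paper works with the average-cost Poisson (Bellman) equation: it posits relative value functions ${V}_{\tilde{c}}^i$ and ${V}_{\bar c}$ of quadratic-plus-linear form, substitutes the closed-loop dynamics into the one-step optimality identity, and reads off the Lyapunov equation for $P_{\tilde c}$, the resolvent formula for $g_{\tilde c}$, and finally $J_{\tilde c}^i$ from matching the quadratic, linear, and constant coefficients respectively. You instead compute the stationary distribution of the Schur-stable closed loop directly --- invariant mean $\mu=(I-A+B\theta)^{-1}(B\tau+\mathbb{m}_1)$ and stationary covariance $\Sigma$ --- and convert $\mathrm{Tr}(Q_{\tilde c}\Sigma)$ into $\mathrm{Tr}(P_{\tilde c}\mathbb{M}_2)$ via the standard controllability/observability Gramian duality, then recover the cross term $g_{\tilde c}^\intercal(B\tau+\mathbb{m}_1)$ from the fixed-point relation $\mu=b+(A-B\theta)\mu$; your treatment of the linear $M_3$ cost through the resolvent is exactly what generates the $2M_3^\intercal Q$ summand in $g_{\bar c}$, matching the statement. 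The two derivations are dual: the paper's verification of a quadratic ansatz is shorter once the ansatz is accepted, while your ergodic computation makes explicit where each term comes from and requires only closed-loop stability plus the trace identity. One point your route surfaces that the paper's does not: carried out literally for $\bar x_t$, your stationary-covariance term would contribute an additional $\mathrm{Tr}(P_{\bar c}\,\mathrm{Cov}(\bar w_t))$ to $J_{\bar c}$, which is absent from the stated formula (and from the paper's proof); this term is $O(1/n)$ since $\mathrm{Cov}(\bar w_t)=M_2/n$, so you should either note that it vanishes in the mean-field limit or flag it as a finite-$n$ correction. Aside from that bookkeeping item, your proof is complete once you justify exchanging the Ces\`aro limit with the stationary expectation, which follows from Schur stability and the finite second moments you invoke.
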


\begin{proof}
Define the relative value functions 
\begin{equation}
\begin{split}
   {V}_{\tilde{c}}^i &= \mathbb{E} \bigg[\sum_{t = 0}^{\infty} \frac{4}{n}  (\Tilde{x}_t^i)^\intercal  Q  M_2  Q \Tilde{x}_t^i - {J}_{\Tilde{c}}^i   \bigg],\\
   {V}_{\bar{c}} &= \mathbb{E} \bigg[\sum_{t = 0}^{\infty} 4  ( \bar{x}_t)^\intercal  Q {{M}_2} Q \bar{x}_t + 4  ( \bar{x}_t)^\intercal Q {M}_3 - {J}_{\bar{c}} \bigg].
\end{split}
\end{equation}
Using backward dynamic programming, it can be shown that such value functions have a quadratic form, i.e. ${V}_{\tilde{c}}^i =  (\Tilde{x}_t^i)^\intercal P_{\tilde{c}} \Tilde{x}_t^i + g_{\tilde{c}}^\intercal \Tilde{x}_t^i + z_{\tilde{c}}$ and ${V}_{\bar{c}}^i =  \bar{x}_t^\intercal P_{\bar{c}} \bar{x}_t + g_{\bar{c}}^\intercal \bar{x}_t + z_{\bar{c}}$. Using the Bellman equation for ${V}_{\tilde{c}}^i$ one has
\begin{equation*}
  \begin{split}
V_{\tilde{c}}^i=&(\Tilde{x}_t^i)^\intercal P_{\tilde{c}} \Tilde{x}_t^i + g_{\tilde{c}}^\intercal \Tilde{x}_t^i + z_{\tilde{c}} \\
=& \frac{4}{n}  (\Tilde{x}_t^i)^\intercal  Q  M_2  Q \Tilde{x}_t^i -{J}_{\Tilde{c}}^i +\mathbb{E}[g_{\tilde{c}}^\intercal\big((A-B\theta)\Tilde{x}_t^i + B\tau +\Tilde{w}_t^i\big)] + z_{\tilde{c}}\\
&+ \mathbb{E}[(A-B\theta)\Tilde{x}_t^i + B\tau +\Tilde{w}_t^i]^\intercal P_{\tilde{c}} [(A-B\theta)\Tilde{x}_t^i + B\tau +\Tilde{w}_t^i]\\
=&(\Tilde{x}_t^i)^\intercal\bigg[ \frac{4}{n} Q  M_2  Q + (A-B\theta)^\intercal P_{\tilde{c}} (A-B\theta) \bigg] (\Tilde{x}_t^i)\\
&\bigg[ 2(B{\tau} +{m}_1)^\intercal P_{\tilde{c}}(A-B\theta) + g_{\tilde{c}}^\intercal(A-B\theta)\bigg](\Tilde{x}_t^i)  - {J}_{\Tilde{c}}^i + z_{\tilde{c}} + \\
&\text{Tr}\bigg[P_{\tilde{c}}(\mathbb{M}_2+(B\tau + \mathbb{m}_1)(\mathbb{M}_2+(B\tau + \mathbb{m}_1)^\intercal \bigg] + g_{\tilde{c}}^\intercal (B\tau + \mathbb{m}_1),
  \end{split} 
\end{equation*}
Then, it follows that
\begin{equation*}
    {J}_{\Tilde{c}}^i = \text{Tr}\bigg[P_{\tilde{c}}(\mathbb{M}_2+(B\tau + \mathbb{m}_1)(\mathbb{M}_2+(B\tau + \mathbb{m}_1)^\intercal \bigg] + g_{\tilde{c}}^\intercal (B\tau + \mathbb{m}_1). 
\end{equation*}
Using a similar argument, ${V}_{\bar{c}}$ can be written as
\begin{equation*}
  \begin{split}
{V}_{\bar{c}} =&\bar{x}_t^\intercal P_{\bar{c}} \bar{x}_t + g_{\bar{c}}^\intercal \bar{x}_t\\
=& 4\bar{x}_t^\intercal  Q  M_2  Q \bar{x}_t + 4 M_3^\intercal  Q \bar{x}_t -{J}_{\bar{c}} +\mathbb{E}[g_{\bar{c}}^\intercal\big((\mathcal{A}-\mathcal{B}\bar\theta)\bar{x}_t + \mathcal{B}\tau +\bar{w}_t\big)]\\
&+ \mathbb{E}[(\mathcal{A}-\mathcal{B}\bar\theta)\bar{x}_t + \mathcal{B}\tau +\bar{w}_t]^\intercal P_{\bar{c}} [(\mathcal{A}-\mathcal{B}\bar\theta))\bar{x}_t +\mathcal{B}\tau +\bar{w}_t]  + z_{\bar{c}} \\
=&\bar{x}_t^\intercal\bigg[ Q  M_2  Q + (\mathcal{A}-\mathcal{B}\bar\theta)^\intercal P_{\tilde{c}} (\mathcal{A}-\mathcal{B}\bar\theta) \bigg] \bar{x}_t\\
&\bigg[ 2(\mathcal{B}{\tau} +{m}_1)^\intercal P_{\tilde{c}}(\mathcal{A}-\mathcal{B}\bar\theta) + 4 M_3^\intercal  Q+ g_{\bar{c}}^\intercal(\mathcal{A}-\mathcal{B}\bar\theta)\bigg]\bar{x}_t\\
&+\text{Tr}\bigg[P_{\bar{c}}(\mathcal{B}\tau + {m}_1)(\mathcal{B}\tau + {m}_1)^\intercal \bigg] + g_{\tilde{c}}^\intercal (\mathcal{B}\tau + {m}_1)  - {J}_{\bar{c}} + z_{\bar{c}},
  \end{split} 
\end{equation*}
which yields
\begin{equation*}
     {J}_{\bar{c}} = \text{Tr}\bigg[P_{\bar{c}}(\mathcal{B}\bar\tau + {m}_1)(\mathcal{B}\bar\tau + {m}_1)^\intercal \bigg] + g_{\bar{c}}^\intercal (\mathcal{B}\bar\tau + {m}_1). 
\end{equation*}

\end{proof}
From Theorem~\ref{subg comp}, we can compute $\Tilde{J}_c= J_{\bar c}+\sum_{i=1}^n \Tilde{J}_c^i$ and then find the subgradients accordingly. Algorithm~1 describes the proposed primal-dual method to solve the optimization problem in \eqref{cooperative}.
\begin{algorithm}[htp]
\caption{Primal-Dual Algorithm for Risk-Constrained Mean-field LQR}\label{optimalSolAlg}

\begin{algorithmic}[1]
  \Input{Initial  ${\lambda}_0$, step size $\eta$}
  \State Iteration counter $k$
  \For {$k = 1, 2,...$}
 
  \State Obtain ${u}_t = \text{argmin}\hspace{0.1cm}\mathcal{L}(\mathbf{u}_t,\boldsymbol{\lambda}_k)$ from Theorem~\ref{primal optimal policy}
  \State Compute ${d}_k$ from Theorem~\ref{subg comp}
  \State Update the multiplier $\lambda_{k+1} = [\lambda_{k} + {\eta}_{k}.{d}_k]_{+}$
\EndFor
\end{algorithmic}

\end{algorithm}

\begin{remark}
     Since the policy in \eqref{Main policy} is stabilizable, the subgradients' and multipliers' vectors have upper bounds.
\end{remark}
\begin{remark}
     Since the subgradients and multipliers are upper bounded, using an argument analogous to that in Theorem~3 in \cite{tamar cdc}, Algorithm~1 converges to the optimal policy after sufficient iterations. 
\end{remark}

\section{Simulations}
We validate the proposed method using numerical simulations on a low-inertia microgrid (MG) system. Consider the load frequency problem (LFC) with risk constraints on the agents' frequency and mean state. The MGs exchange information with each other through the mean state of the system.

\begin{table}[!b]
\centering
\caption{Simulation Parameters from \cite{microgrid}}
    \begin{tabular}{ c  c  c c}
   \hline\hline
 Damping Factor & $D$ & 16.66 & MW/Hz \\ 
 \hline
 Speed Droop & $R$ & 1.2$\times 10^{-3}$ & Hz/MW \\
 \hline
 Turbine Static Gain & $K_t$ & 1 & MW/MW \\
 \hline
 Turbine Time Constant & $T_t$ & 0.3 & s \\
 \hline
 Area Static Gain & $K_p$ & 0.06 & Hz/MW \\ 
 \hline
 Area Time Constant & $T_p$ & 24 & s \\ 
 \hline
 Tie-line Coefficient & $K_{\text{tie}}$ & 850 & MW/Hz \\ 
 \hline
    \end{tabular}
    
\end{table}
Consider microgrids in $n$ areas. Let $\Delta P_{\text{tie},i}$ and $\Delta f_i$ denote the power inflow and the frequency deviation corresponding to the $i$th microgrid. We assume that this power flow is proportional to the discrepancy between the frequency deviation of each area and the mean frequency deviation of all areas, i.e.
\begin{equation*}
 \Delta P_{\text{tie},i} = \int K_{\text{tie},i}(\Delta f_i - \Delta \bar f) dt
\end{equation*}
In addition, the control signal of the $i$th area is the sum of two terms given below
\begin{equation*}
    \Delta u_{tot,i} = \Delta P_{f,i} + \Delta P_{C,i},
\end{equation*}
where $\Delta P_{f,i} = -\frac{1}{R_i}\Delta f_i$, and $\Delta P_{C,i}$ denotes the automatic generation control (AGC). These two controls specify the output power of the microgrid at the $i$th area denoted by $\Delta P_{G,i}$. The other state variable is the area control error (ACE) denoted by $z_i:= \beta \Delta f_i + \Delta P_{\text{tie},i}$ with the bias factor $\beta_i = D_i + \frac{1}{R_i}$. 

The overall state of each microgrid is
\begin{equation*}
 x^i = [\Delta f_i,\Delta P_{G,i},\Delta P_{\text{tie},i},\int z_i]^\intercal.   
\end{equation*}
The dynamics of the system is 
\begin{equation*}
    x^i_{t+1} = A x^i_{t} + \bar A \bar x_t + B u_t^i,
\end{equation*}
where 
\begin{equation*}
\begin{split}
   A&=\left[\begin{array}{c c c c} 
  -\frac{1}{T_p}&\frac{K_p}{T_p}&-\frac{K_p}{T_p}&0 \\ 
 -\frac{K_t}{RT_t}&-\frac{1}{T_t}&0&0\\
 0&0&0&1\\
 \beta&0&1&0
\end{array} \right]\\
\bar A &=\left[\begin{array}{c c c c} 
  0&0&0&0 \\ 
 0&0&0&0\\
 K_{tie}&0&0&0\\
 \beta&0&0&0
\end{array} \right]\\
B &=\left[\begin{array}{c} 
  0 \\ 
 0\\
\frac{K_t}{T_t}\\
 0
\end{array} \right]
\end{split}
\end{equation*}

We use the parameters in Table~I from \cite{microgrid}. Also, we select $Q =\text{diag}(800,80,80,4000)$, $R = 5$ and $\Lambda = 100$. Fig.~\ref{constviol} shows the constraint violation, where it is observed that as the number of iterations grows, the constraint violation tends to zero. In other words, the control law resulting from the algorithm minimizes the common cost function of players while not violating the system's constraint.{Fig.~\ref{stateComp} illustrates the variation of the first state, i.e. $\Delta f_i$, with high-amplitude disturbances at different time instants. We compare the performance of the proposed method with that of the risk-neutral control approach. It is observed that our method results in less state fluctuations and smaller overshoot in the presence of high-amplitude disturbances, confirming the results developed in Theorems~\ref{primal optimal policy} and \ref{subg comp}.}

\begin{figure}
\centering
	\includegraphics[scale=0.6]{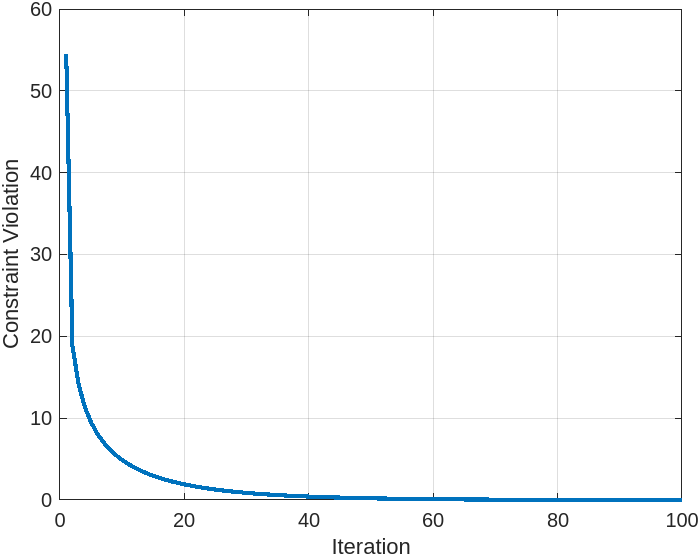}
	\caption{ Constraint violation with iterations for the microgrid problem }
	\label{constviol}
\end{figure}

\begin{figure}
\centering
	\includegraphics[scale=0.4]{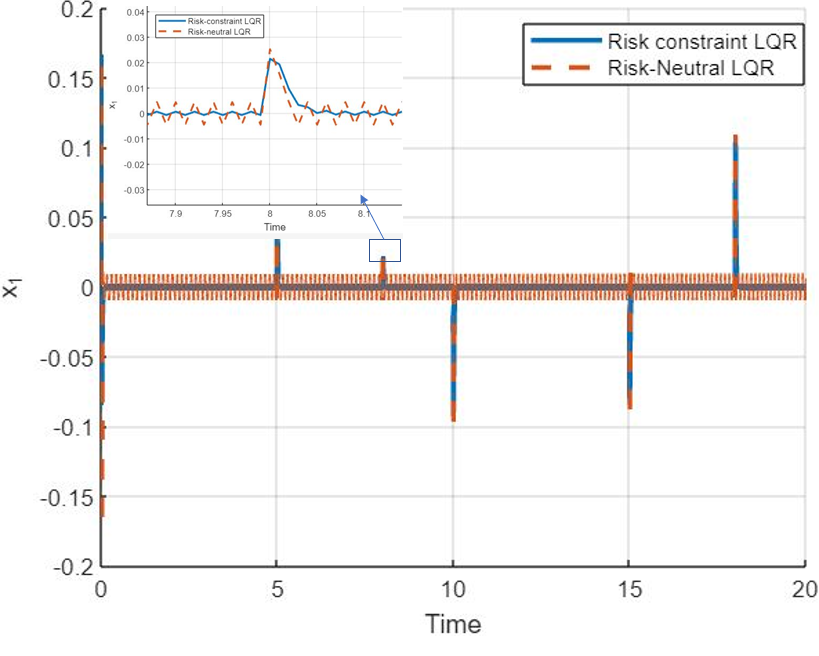}
	\caption{ {Comparison of system state using the risk-neutral controller and the proposed one}}
	\label{stateComp}
\end{figure}
\section{Conclusions}
We proposed a computationally-efficient method to tackle the problem of risk-constrained control of mean-field linear quadratic systems. The method only requires the solution of two Riccati equations and is independent of the number of players. This is a feature that is essential in controlling a multi-agent system of large size. The application of policy gradient methods as an alternative approach and considering individual constraints for the players are two interesting topics for the extension of the current research.

\end{document}